\newtheorem{theorem}{Theorem}
\newtheorem{lemma}[theorem]{Lemma}
\newtheorem{corollary}{Corollary}[theorem]
\begin{document}
%
\title{Correcting for Non-Markovian Asymptotic \\ Effects  Using Markovian Representation}
%
%
%

\author{Vitali Volovoi
\thanks{V. Volovoi is an independent consultant in  Alpharetta,
Georgia, USA.}
\thanks {E-mail: vitali@volovoi.com. Version  May 16, 2017.
}
}
%
%

\markboth{Volovoi: Non-Markovian Asymptotic Effects Correction}%
{Volovoi: Non-Markovian Asymptotic Effects Correction}
%


\maketitle
\begin{abstract}
Asymptotic properties of Markov processes, such as steady-state probabilities or the transition rates to  absorbing states, can be efficiently calculated by means of linear algebra even for large-scale problems. This paper discusses the methods for adjusting the parameters of  Markov models to account for non-constant transition rates. In particular, transitions with fixed delays are considered, along with transitions that follow Weibull and lognormal distributions.  Procedures both for  steady-state solutions in the absence of an absorbing state and for transition rates to an absorbing state are provided, and demonstrated in several examples. 
\end{abstract}

\begin{IEEEkeywords}
state-space stochastic models; semi-Markov; non-Markov; asymptotic.
\end{IEEEkeywords}

 \ifCLASSOPTIONpeerreview
 \begin{center} \bfseries EDICS Category: 3-BBND \end{center}
 \fi
%
\IEEEpeerreviewmaketitle

\section{Introduction}
%
%
%
%
\IEEEPARstart{T}{his}  paper introduces a simple practical  correction to Markov continuous time models to account for variable transition rates between system states. The correction is based on local balance of the outflows from a given node of the state-space model. The correction provides an accurate means of evaluating the relevant asymptotic performance measures of the system. For a detailed review of the methods for evaluating non-Markovian processes, please refer to ~\cite{Distefano2012}.  Therein, the methods are grouped into the following three categories: 

\begin{itemize}
\item {\em Phase-type expansions}: non-exponential transitions  are replaced with  sets of exponentially distributed phases (stages);
\item {\em Markov renewal theory}:  relies on finding points in time when the system is renewed and the memoryless (Markov) property holds. The resulting Markov discrete time process is  thus ``embedded" into the original continuous time process;
\item {\em Supplementary variables}:  the elapsed holding (sojourn) time is explicitly described by supplementary continuous variables associated with each state.
\end{itemize}
The method described in this paper is most closely related to the second category, in particular the two-phase method for finding a steady state solution.  The main  source of distinction is the focus of the current paper on the  asymptotic intensity of local transition flows among states. As a result, there is no need to find a solution for a global embedded discrete Markov process  before addressing the time continuous problem. Instead of trying to solve the problem from scratch, the current approach zooms in on the portion of the model where the asymptotic behavior deviates from that of the corresponding Markov model. 

 For Markov processes with $n$ states and continuous time, the  governing system of (Chapman-Kolmogorov) differential equations  can be written as follows: 
 \begin{equation}
\frac{dP(t)}{dt}=Q\cdot P(t), \quad Q_{ii}=-\sum_{j\neq i}^n{Q_{ji}}
 \label{kolm1}
\end{equation}
Here  $P_i(t)$, $i=1\ldots n$ are the  probabilities of being in state $S_i$: $P_i(t)=Pr\{X=S_i\}$; and $Q$ is the transition rate matrix with the diagonal terms compensating for the off-diagonal terms in each column.

Let us consider a series of simple examples, starting with the simplest renewable ``birth" and ``death" process. The processes under consideration are relevant in multiple domains (see, for example, \cite{Doorn2013}), but for the sake of specificity, we will cast the problem in terms of system reliability, so that there is a system consisting of a single component that can fail with a (constant) failure rate $\lambda$, and repaired with a constant rate $\mu$.
In other words, both transitions are memoryless and follow exponential distributions.
The resulting process has two states, $S_0$ and $S_1$;  it is Markovian, and in the steady state,  the probabilities  $P_0$ and $P_1$ can be obtained by balancing inflow and outflow for either state: $\lambda P_0=\mu P_1$, so $P_0=\lambda/(\lambda+\mu)$.

Let us now consider a system where the holding (sojourn) time at each state follows general distributions $F(t)$ and $G(t)$ for failure and repair, respectively.  To facilitate the comparison, the means of each distribution remain  unchanged: $\bar{F}=1/\lambda$ and $\bar{G}=1/\mu$. The corresponding process is semi-Markovian, as there is no memory of the previous states, but the transition rate from a state does depend on the holding time in that state.

For each of the transitions, the hazard (conditional) transition rate can be defined:
\begin{equation} 
 h_f(t)=f(t)/(1-F(t))
 \label{haza}
\end{equation}
 
Here the subscript $f$ is used to indicate that the transition rate is evaluated for distribution $F$ (similarly, $h_g(t)$ can be defined for distribution $G$).

 Let us explore the analogy between the changes in  state probabilities and the flows of a fluid (the dynamics of both are represented mathematically by similar differential equations). There are two classical specifications of the flow field in continuum mechanics: the Eulerian view considers a specific location and  focuses on how the flow through that location changes with time; in contrast, the Lagrangian view is focused on the dynamics of a fluid ``particle" (parcel). The two views are equivalent, and complementary in terms of providing insights into fluid flow fields~\cite{Lamb1895,LandauLifshitz1987}. 
 
The notion of time $t$ in Eq.~\ref{haza}  can be considered Lagrangian. Indeed, the hazard rate is a function of how long a particular component is being repaired (holding time). In other words,  a ``particle" trajectory is represented. This is closely related  to  a ``local" or component-based view, which can be considered a {\em raison d'etre} for Stochastic Petri Nets (SPNs)~\cite{Marsan1990,Haas2002,Volovoi2015}.  In SPNs, the individual components of a system are explicitly represented, and  their histories can be tracked.
 
 The ``global" Markovian representation of a state space  gravitates  toward the Eulerian view: the dynamics of individual ``particles" are not represented---or rather they all behave in identical manner, and their individual pasts are irrelevant. Semi-Markov models tilt the view toward the Lagrangian perspective, as they effectively embed a ``particle" representation for each individual state by introducing the dependence of a conditional transitional rate, Eq.~\ref{haza}, on the holding time in a state.  However, one can revert to the Eulerian viewpoint and consider the intensities of the  flows of aggregated particles as a function of global time, as well as in a steady state. This is the approach taken in this paper.
 
 For the considered single-part example, the steady-state intensity of such aggregated transitions will remain the inverse of the means ({\em i.e.,} $1/\bar{F}$ and $1/\bar{G}$, respectively), regardless of the type of distribution used for each transition. In other words,  a Markovian approximation does not introduce any errors in estimating an aggregate measure, such as the chances that the system will be operational in a steady state; however, this is not always the case, as illustrated next.
 
Let us consider a two-part redundant system,  with the state representation shown in Figure~\ref{markov2redundant}. The two parts are identical, so there are only three states.  
  \begin{figure}[!t]
\centering
\includegraphics[width=2.5in]{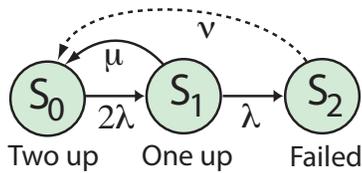}
\caption{Markov model of a two-part redundant system. The dashed line represents an artificial transition to avoid an absorbing state}
\label{markov2redundant}
\end{figure}
The transition rate matrix has the following form:
\begin{equation}
\quad Q=\left(
\begin{array}{ccc}
-2\lambda&\mu&0\\
 2\lambda&-\lambda-\mu&0\\
0&\lambda&0\\
\end{array}
\right)
\label{kolm3}
\end{equation} 
The focus is on  the frequency of system failures, {\em i.e.,} the hazard rate of the system, as defined by Eq.~\ref{haza}  and interpreted at the system level ({\em i.e.,} he transition from any non-failed state to a failed state is considered). See~\cite{Doorn2013} for a comprehensive review of the general class of problems with quasi-stationary distributions, the phenomenon that first was explicitly identified by A. M. Yaglom~\cite{Yaglom1947}. This hazard (decay) rate  is of significant practical value~\cite{Volovoi2017mmr}, and there are two main approaches to its calculation for Markovian systems.

 The first method considers an artificial transition (depicted in Figure~\ref{markov2redundant} by a dashed line) in order to convert the considered process with an absorbing state into a renewal process. This implies that the last column in Eq.~\ref{kolm3} will have two non-zero entries: $\nu$ and  $-\nu$ in the first and the last row, respectively. 
 
 After finding a steady-state solution, one can evaluate conditional probabilities $P_0$ and $P_1$ given that the system has not failed ({\em i.e.,}  $\hat{P}_i=Pr\{X=S_i | X \notin F\}, i=0,1$; in this case there is only one failed state $S_2$). Using the balance equation for the $S_1$, we observe that $(\lambda+\mu)\hat{P}_1=2\lambda \hat{P}_0$.  Let us consider $\lambda=\mu=1$, then $\hat{P}_0=\hat{P}_1=0.5$, so that the  system failure hazard rate is 
  \begin{equation} 
 h=\hat{P}_1 \lambda=\frac{2\lambda^2}{3\lambda+\mu}=0.5
  \label{haza1}
\end{equation}
Note that the rate of artificial transition $\nu$ does not affect the solution.  This independence  is not accidental, as shown in Lemma 3 toward the end of this paper.
The second method instead relies on the Perron-Frobenius theorem to determine  
 the systems' hazard rate. 
 \subsection{Perron-Frobenius eigenvalue}
 The key assertion underpinning this method has been used in~\cite{Darroch1965} for discrete Markov chains (see also ~\cite{Boussemart2001} for a practical application),  and here it is demonstrated directly for continuous Markov processes (originally derived in~\cite{Darroch1967}). Let us consider a system where all non-absorbing states belong to the same communicative class ({\em i.e.,} any state can be reached from any other)~\cite{Haverkort2001}. Then the following holds:
 \begin{lemma}
 The asymptotic rate of transition to an absorbing state is the absolute value of the smallest negative eigenvalue of the transition matrix $Q$.
\end{lemma}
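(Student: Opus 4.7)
The plan is to reduce the problem to a subgenerator matrix on the transient (non-absorbing) states and then apply the Perron--Frobenius theorem to that subgenerator. First I would order the states so that the non-absorbing states come first, and write $Q$ in block form
$$
Q=\begin{pmatrix} Q_T & 0 \\ Q_{AT} & 0 \end{pmatrix},
$$
where $Q_T$ is the square sub-matrix describing transitions among transient states only, and $Q_{AT}$ collects the rates into the absorbing states. The spectrum of $Q$ is then the spectrum of $Q_T$ together with as many zero eigenvalues as there are absorbing states. Hence the negative eigenvalue of $Q$ with smallest modulus coincides with the eigenvalue of $Q_T$ of largest real part, and it suffices to identify this number as the asymptotic transition rate.

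Next I would invoke Perron--Frobenius for $Q_T$. Because all non-absorbing states belong to a single communicating class, the off-diagonal pattern of $Q_T$ is irreducible, so for any sufficiently large constant $c$ the matrix $M:=Q_T+cI$ is an irreducible non-negative matrix. By Perron--Frobenius, $M$ has a simple, real, strictly dominant eigenvalue $\rho(M)$ with strictly positive left and right eigenvectors $u$ and $v$. Translating back, $Q_T$ has a simple real eigenvalue $-\alpha:=\rho(M)-c$ which strictly dominates the rest of the spectrum in real part, and $-\alpha<0$ because at least one column of $Q_T$ has a strictly negative sum (some transient state can reach the absorbing set).

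With that in hand, the asymptotic behaviour follows by expanding the transient part of the solution of Eq.~\ref{kolm1} in the (generalised) eigenbasis of $Q_T$: $P_T(t)=e^{Q_T t}P_T(0) = c_0 e^{-\alpha t} v + R(t)$, where every term in the remainder $R(t)$ decays strictly faster than $e^{-\alpha t}$. The leading coefficient $c_0 = \langle u, P_T(0)\rangle$ is strictly positive since $u\gg 0$ and $P_T(0)$ is a nonzero probability vector on the transient states, so the survival probability $S(t):=\sum_{i\in T}P_i(t)=\mathbf{1}^\top P_T(t)$ satisfies $S(t)\sim C e^{-\alpha t}$ with $C>0$. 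The system-level hazard rate in the sense of Eq.~\ref{haza} is $h(t)=-S'(t)/S(t)$, which therefore converges to $\alpha$ as $t\to\infty$, proving the lemma.

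The main obstacle is the Perron--Frobenius step, and specifically ruling out that the coefficient $c_0$ of the dominant mode might vanish for pathological initial distributions; the irreducibility of $Q_T$ (guaranteed by the single-communicating-class hypothesis) secures strict positivity of the left Perron eigenvector $u$ and thus $c_0>0$, which is what makes $-\alpha$ genuinely the asymptotic rate rather than merely a possible one.
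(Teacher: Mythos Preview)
Your argument is correct and follows essentially the same route as the paper: both isolate the transient block, apply Perron--Frobenius via an affine shift (you use $Q_T+cI$, the paper uses the discretization $\delta Q+I$, which is the same device), and then read off the hazard rate from the dominant term in the spectral expansion. Your version is in fact slightly more careful, since you explicitly verify that the coefficient $c_0$ of the dominant mode is strictly positive using the left Perron eigenvector, a point the paper leaves implicit.
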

\begin{proof}
  Let us consider a general case with $n$ ``up" (non-absorbing) states $U_k, k=1\ldots n-1$, and a single failed (absorbing) $n$-th state $F$. Combining several absorbing states into a single absorbing state does not lead to a loss of generality, since  the relevant dynamic of the system only concerns the ``up" states. After this dynamic is described by evaluating the temporal history of the probability of each ``up"  state, the rates of transition to each of the failed states can be segregated if desired.
 
Let us recover the transition matrix from time $t$ to $t+\delta$: $P(t+\delta)=\Pi P(t)$ (here   $\delta$ is a small time step). Using Eq.~\ref{kolm1}, it can be expressed as 
$\Pi=\delta Q+I$, where $I$ is an identity matrix, and $\Pi$ is a stochastic matrix with the last column representing the absorbing state. This last column has the diagonal term $1$ and all zeros for the  off-diagonal terms. As a result, $\Pi$  is a block diagonal matrix  with the first block $\tilde{\Pi}$  of size $n-1$  corresponding to the ``up" states and the last block of size $1$ corresponding to the absorbing (``down") state. 

 In accordance with the Perron-Frobenius theorem for $\tilde{\Pi}$, there is a unique and distinct largest eigenvalue $\tilde{k}<1$.  In addition, there is an  eigenvalue of $1$ stemming from the absorbing state.  Converting these eigenvalues to the corresponding eigenvalues for matrix $Q$, we can conclude that its two largest eigenvalues are $0$ for the absorbing state and the negative Perron-Frobenius  eigenvalue  $-k=\tilde{k}-1$  (here $k>0$), with the rest of eigenvalues being all negative and strictly less than $-k$.

As a result, we can represent the probability of an ``up" state as follows:
\begin{equation}
U_i (t) =A c_i+B V_i e^{-kt}+ \ldots
\label{pf1}
\end{equation}
and the ``down"  state 
\begin{equation}
F(t) =A c_{n}+B v_{n} e^{-kt}+ \ldots
\label{pf2}
\end{equation}

Here the dots represent the contributions from the rest of the eigenvalues that are all negative with absolute values larger than $k$, and where $c_i$ and $v_i$ are components of the first two right eigenvectors (corresponding to the  first two largest eigenvalues). These contributions, represented by dots, can be neglected for sufficiently large instances of time. Taking these expressions to the limit $+\infty$, and given that  $F$ is an absorbing state (so that its probability  tends to unity), yields
\begin{equation}
\lim_{t\to+\infty} Pr\{F\}(t)=1-B v_{n}e^{-kt}
\label{pf3}
\end{equation}
 and  similarly, 
 \begin{equation}
\lim_{t\to+\infty} Pr\{U_i\}(t)=B v_i e^{-kt}
\label{pf4}
\end{equation}
  Recalling the definition of the hazard rate, Eq.~\ref{haza1}, and differentiating Eq.~\ref{pf3}  yields
   \begin{equation}
h(t)=\frac{dF/dt}{(1-F(t))}=\frac{k B v_{n}e^{-kt}}{B v_{n}e^{-kt}}=k
\label{pf5}
\end{equation}
\end{proof}
There are three corollaries that follow from Eq.~\ref{pf4}: 
\begin{corollary}
Non-absorbing states reach a quasi-steady state. \end{corollary}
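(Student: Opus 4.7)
The plan is to show that the conditional probabilities $\hat{P}_i(t) = \Pr\{U_i\}(t)/\sum_{j=1}^{n-1}\Pr\{U_j\}(t)$ tend to time-independent limits, which is the standard operational meaning of a quasi-steady state for a process with an absorbing class.

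First, I would invoke Eq.~\ref{pf4} directly: for each non-absorbing state $U_i$, the long-time asymptotic form is $\Pr\{U_i\}(t) = B v_i e^{-kt} + r_i(t)$, where $r_i(t)$ collects the contributions of the strictly smaller (more negative) eigenvalues of $Q$ and therefore satisfies $r_i(t) = o(e^{-kt})$ as $t\to+\infty$. Summing over the non-absorbing states yields $\sum_{j=1}^{n-1}\Pr\{U_j\}(t) = B\bigl(\sum_{j=1}^{n-1} v_j\bigr) e^{-kt} + o(e^{-kt})$.

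Second, I would form the ratio defining $\hat{P}_i(t)$, factor $B e^{-kt}$ from both numerator and denominator, and pass to the limit:
\begin{equation}
\lim_{t\to+\infty}\hat{P}_i(t)=\frac{v_i}{\sum_{j=1}^{n-1}v_j}.
\label{qss-limit}
\end{equation}
The right-hand side is independent of $t$, which is precisely the quasi-steady-state assertion. The Perron--Frobenius theorem, already invoked in the proof of Lemma~1 for $\tilde{\Pi}$, guarantees that the eigenvector associated with the dominant eigenvalue can be chosen with strictly positive components, so the denominator in Eq.~\ref{qss-limit} is nonzero and the normalization is legitimate.

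The only subtle point --- and the one I expect to be the main obstacle --- is making the ``$+\ldots$'' in Eqs.~\ref{pf1}--\ref{pf4} fully rigorous: one must verify that all remaining eigenvalues of $Q$ on the non-absorbing block are strictly less than $-k$ (again from Perron--Frobenius, since $\tilde k$ is the unique dominant eigenvalue of $\tilde{\Pi}$), and that the assumption of a single communicating class among the non-absorbing states prevents the limit $B(\sum_j v_j)$ from degenerating. Granted these facts, the decay of the correction terms relative to $e^{-kt}$ is uniform across $i$, so dividing through is harmless and the limit Eq.~\ref{qss-limit} follows.
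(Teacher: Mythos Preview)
Your proposal is correct and follows essentially the same route as the paper: both arguments read off from Eq.~\ref{pf4} that the conditional probabilities $\hat{P}_i(t)$ converge to the (normalized) components $v_i$ of the Perron--Frobenius eigenvector, with the paper simply adopting the normalization $\sum_{i=1}^{n-1} v_i = 1$ up front rather than carrying the denominator through the limit as you do. Your treatment is more explicit about the $o(e^{-kt})$ remainder and the positivity of $\sum_j v_j$, but the underlying idea is identical.
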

Indeed, the conditional probabilities of each ``up" state $U_i$ asymptotically do not depend on time, because they  are simply the corresponding components of the Perron-Frobenius eigenvector
  $\hat{P}_i=Pr\{X=U_i |X\notin F\}=v_i $, where the right eigenvectors are normalized, so that  $\sum_{i=1}^{n-1}{v_i}=1$. 
  \begin{corollary}
  The probability of being in each non-absorbing states decays as $e^{-kt} $.\end{corollary}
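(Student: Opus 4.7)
The plan is to read this corollary off directly from equation~\ref{pf4}, which was already established in the proof of Lemma~1. That equation asserts that for each non-absorbing state $U_i$ the asymptotic behavior of $Pr\{U_i\}(t)$ is $B v_i e^{-kt}$, where $v_i$ is the $i$-th component of the Perron--Frobenius right eigenvector of $Q$ and $B$ is a constant determined by the initial condition. So the corollary is essentially a restatement of that asymptotic identity, with attention shifted from the value of the coefficient $B v_i$ to the temporal decay factor $e^{-kt}$.

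First, I would remind the reader that the general spectral expansion $U_i(t) = A c_i + B V_i e^{-kt} + \cdots$ in Eq.~\ref{pf1} combines an absorbing-state mode (eigenvalue $0$), the Perron--Frobenius mode (eigenvalue $-k$), and a tail of modes with eigenvalues $\lambda_j < -k$. Second, I would observe that because $F$ is absorbing, $\sum_{i=1}^{n-1} Pr\{U_i\}(t) + Pr\{F\}(t) = 1$ for all $t$, and since $Pr\{F\}(t) \to 1$, the constant part $A c_i$ of each $U_i(t)$ must vanish. Third, I would note that all the remaining terms in the spectral expansion decay strictly faster than $e^{-kt}$, so for large $t$ they are negligible relative to the Perron--Frobenius contribution. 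Therefore $Pr\{U_i\}(t) \sim B v_i e^{-kt}$, which is precisely the claimed $e^{-kt}$ decay.

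I do not anticipate a serious obstacle, since all the analytic work (existence and uniqueness of the dominant eigenvalue, the spectral expansion, and the separation of modes) was discharged in Lemma~1. The only mildly delicate point is the claim that the $Ac_i$ term vanishes for $U_i$: this requires using the conservation of total probability together with the fact that the absorbing state alone supplies the unit eigenvector at eigenvalue $0$. Once that is noted, the corollary reduces to a one-line consequence of Eq.~\ref{pf4}.
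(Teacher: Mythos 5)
Your proposal is correct and follows essentially the same route as the paper: the corollary is read directly off the asymptotic expression in Eq.~\ref{pf4}, which the paper obtains from the spectral expansion of Lemma~1 by noting that the absorbing state's probability tends to unity (hence, by conservation and non-negativity of probabilities, the constant mode vanishes for the ``up'' states) and that all remaining modes decay strictly faster than $e^{-kt}$. Your added remark spelling out why the $Ac_i$ term vanishes is just a slightly more explicit version of the paper's own one-line justification, so there is nothing substantively different to flag.
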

  Another way to represent the eigenvalue $k$ can be also useful:
 
   \begin{corollary}
  The hazard rate is a weighted sum of direct transition rates to the absorbing state, with the weighting provided by the conditional probabilities  $\hat{P}_i$:
  \begin{equation}
k=\sum_{1}^{n-1}{v_i Q_{ni}}=\sum_{1}^{n-1}{\hat{P}_i Q_{ni}}
\label{pf66}
\end{equation} 
  \end{corollary}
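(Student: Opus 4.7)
The plan is to compute both sides directly from the asymptotic expressions (\ref{pf3})--(\ref{pf4}) already established in the lemma, and to observe that the normalization condition from Corollary~1 collapses the ratio to a single sum.

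First I would write the master equation for the absorbing state. Because $F$ is absorbing, the $n$-th row of $Q$ has $Q_{nn}=0$, so Eq.~\ref{kolm1} gives
\begin{equation}
\frac{dF}{dt}=\sum_{i=1}^{n-1}Q_{ni}\,U_{i}(t).
\end{equation}
Substituting the asymptotic form $U_{i}(t)=B v_{i}e^{-kt}$ from Eq.~\ref{pf4} yields $dF/dt=Be^{-kt}\sum_{i=1}^{n-1}v_{i}Q_{ni}$. On the other hand, because probabilities sum to one and the other modes decay faster than $e^{-kt}$, the survival probability asymptotically satisfies $1-F(t)=\sum_{i=1}^{n-1}U_{i}(t)=Be^{-kt}\sum_{i=1}^{n-1}v_{i}$.

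Next I would form the hazard rate as in Eq.~\ref{pf5}: the common factor $Be^{-kt}$ cancels, leaving the time-independent ratio $\sum v_{i}Q_{ni}/\sum v_{i}$. Invoking the normalization from Corollary~1, namely $\sum_{i=1}^{n-1}v_{i}=1$, this ratio reduces to $\sum_{i=1}^{n-1}v_{i}Q_{ni}$. Since the left-hand side equals $k$ by Lemma~1, I obtain the first equality in Eq.~\ref{pf66}. The second equality then follows immediately by replacing $v_{i}$ with $\hat{P}_{i}$, as guaranteed by Corollary~1.

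I do not anticipate a real obstacle: the only subtlety is bookkeeping around which terms survive in the $t\to\infty$ limit. Specifically, one must argue that the faster-decaying modes hidden in the ``$\ldots$'' of Eqs.~\ref{pf1}--\ref{pf2} contribute negligibly to both $dF/dt$ and $1-F(t)$, so that their ratio is genuinely governed by the Perron--Frobenius mode alone. This is precisely the same asymptotic argument used in the proof of Lemma~1, so no new machinery is required.
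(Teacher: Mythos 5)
Your proof is correct and in substance matches the paper's own argument: the paper reads Eq.~\ref{pf66} off the last row of the eigenvalue relation $Q\mathbf{v}=-k\mathbf{v}$, using probability conservation (summing Eqs.~\ref{pf3} and \ref{pf4} to get $\sum_{i=1}^{n}v_i=0$, hence $v_n=-1$) together with the normalization $\sum_{i=1}^{n-1}v_i=1$. Your time-domain route---writing $dF/dt=\sum_{i=1}^{n-1}Q_{ni}U_i(t)$, inserting the asymptotic mode $U_i=Bv_ie^{-kt}$, and forming the hazard ratio with $h=k$ from Lemma~1---is the same last-row-of-$Q$ argument expressed through the asymptotic solution rather than the eigenvector algebra, so there is no gap.
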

  To demonstrate that this corollary is true,  Eqs.~\ref{pf3} and \ref{pf4} are summed together as probabilities that add up to unity, which yields $\sum_{i=1}^n{v_i}=0$. Given the chosen normalization in the Corollary 2,  $v_n=-1$, Eq.~\ref{pf66} is simply the last row of the matrix expression for the eigenvalue $-k$: $Q\mathbf{v}=-k\mathbf{v}$ (where $\mathbf{v}$ is the vector with components $v_i$).
  
  Returning to  the two-part redundant system example (with the state space depicted in Figure~\ref{markov2redundant} and matrix $Q$ provided in Eq.~\ref{kolm3}), and applying Lemma 1,   one can write the characteristic equation of the block  matrix  for the first two states:
\begin{equation}
k^2+k(3\lambda+\mu)+2\lambda^2=0
\label{characteristic3}
\end{equation}
The absolute value of smallest (negative) root for Eq.~\ref{characteristic3} is
\begin{equation}
k_1=\frac{3\lambda+\mu-\sqrt{(3\lambda+\mu)^2-8\lambda^2}}{2}
\label{characteristic4}
\end{equation}

Evaluating Eq.~\ref{characteristic4} for specific numerical values,  $h_{pf}=k_1=0.58579$ is noticeably  higher than the value obtained using  the renewal process earlier (see Eq.~\ref{haza1}). The difference becomes significantly smaller when the repair rates are larger than the failure rates ({\em i.e.,} the non-dimensional parameter $\rho=\lambda/\mu\ll 1$); see Figure~\ref{pfvsrenewal}. 
\begin{figure}[!t]
\centering
\includegraphics[width=3in]{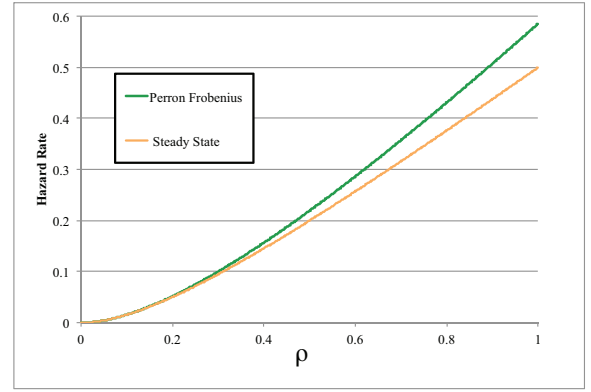}
\caption{Comparison between hazard rates based on the Perron-Frobenius eigenvalues vs. renewal process}
\label{pfvsrenewal}
\end{figure}
The difference is explained by the influence of the transient behavior: the renewal approach effectively averages over  the renewal  time segment, which starts with a fully functional system, as opposed to the purely asymptotic value provided by the Perron-Frobenius eigenvalue. For Markov processes, the size of the transient effect is related to the relative difference between the Perron-Frobenius eigenvalue and the next largest eigenvalue  modulus or SLEM~\cite{Boyd2004}.
\subsection{Semi-Markov effect}

In contrast  to a single-component system,  for a two-part redundant system  the asymptotic values characterizing the system will depend on the types of distributions assigned to the repair transitions (and not only on the mean values). For example, let us consider repairs that are completed after a  fixed delay. Keeping the same mean value for the repair, the hazard rate will increase compared to the exponential distribution.

Figure~\ref{fixedDelay} depicts the corresponding hazard rate as calculated by two ``brute force" methods: Monte Carlo simulation using $100$ million replications, and a forward-marching finite difference scheme with a step of $6\times10^{-4}$ and $10,000$ steps. Here the holding (sojourn) time distribution for repair is a part of the state description. 

It may be observed that the Monte Carlo simulation has more noise toward the end of the simulation. This is expected given the fact that, as simulation time progresses, there are fewer sample paths that correspond to the non-failing state ({\em cf.}  \ref{pf4}) and the accuracy of the method decreases. Nevertheless, the trend is clearly observed---the initial oscillations die out, and a steady state is reached. Both the presence of oscillations and the associated longer transient duration (as compared to exponentially distributed repairs) are artifacts of the repairs with fixed delays. The hazard rates time averaged over  $t\in [4,6]$  for the finite difference and Monte Carlo simulations are  $0.62513$ and $0.6265$, respectively. We expect the time-difference method to be more accurate, but it is clear that the true value is significantly higher than the one observed for the exponential repair ($0.58579$). 
 \begin{figure}
\begin{center}
\includegraphics[keepaspectratio=true,width=3.0in]{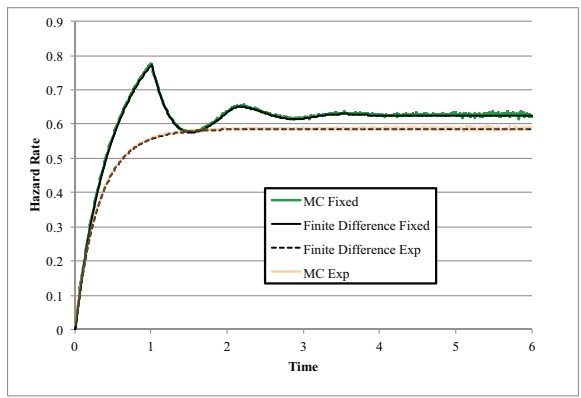}
\end{center}
\caption{Comparing hazard rates for a two-part  system with repairs occurring after a fixed delay of $\tau=1$. \label{fixedDelay}}
\end{figure}

The next three sections of this paper will address the  corrections for the following cases: 
a steady-state solution for the semi-Markovian model,  a hazard rate in the presence of an absorbing state; and finally, some of the cases where abandoning the Markovian assumption leads to non-Markovian (rather than semi-Markovian) processes.

\section{Steady-state correction for semi-Markov processes}
Let us consider a generic node of a state-space representation (see Figure~\ref{genericNode}). First, we note that all inputs from the states $I_1\ldots I_k$ can be combined into a single inflow: in fact, the magnitude of this inflow is not important (it will be normalized later). What is relevant is that this inflow has a constant rate in a steady state.  If there is only a single output to node $O_1$, then (regardless of the particular distribution for that transition) the only relevant aspect of this transition is the overall intensity, or (equivalently), the mean value of the distribution. So, in this case the substitution of a non-exponential distribution for an exponential one is trivial: only the mean value needs to be matched.  
\begin{figure}[!t]
\centering
\includegraphics[width=2.2in]{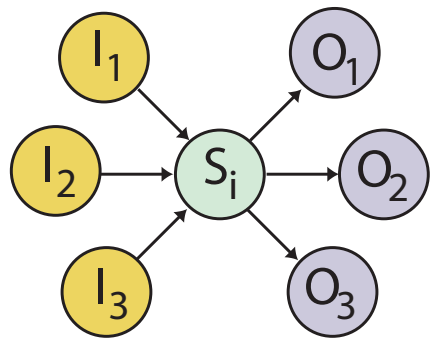}
\caption{Generic node representing a single state in a state-space representation}
\label{genericNode}
\end{figure}

If there are several exponential distributions, their effect can be combined by summing up their respective transition rates. For example, let us assume that for the node shown in Figure~\ref{genericNode},  transitions to the states $O_1$ and $O_2$ follow exponential distributions with the rates $\lambda_1$ and $\lambda_2$, respectively. Then the effect of the outflow can be represented by a combined exponential distribution with the intensity $\lambda=\lambda_1+\lambda_2$. 

Let us consider a general case and focus on a transition to a particular state that  follows a general distribution, with a cumulative distribution $G(t)$ and probability density function $g(t)$.  All other outflows are combined into an equivalent single distribution
\begin{equation}
F(t)=1-\prod_{j=1}^m{(1-F_j(t))}
\label{combinedFlows}
\end{equation}

From this perspective, there is a  ``race"  between the outflow transitions. In that respect, there is a similarity to the competing risks problem considered in~\cite{VolovoiVega2012}.  


Given that  inflow into the node is constant, the key consideration is the  impact of the outflow from the node on the distribution of the holding time $t$ of the outflow. Specifically,  at time $t_0$, the probability that the holding time was $t$ ({\em i.e.,} the last state transition occurred at time $t_0-t$) is  decreasing as  $t$ increases: as more time elapses there are more  chances  for the outflow to occur.
More precisely, for a given $t$, this value is $R(t)=A(1-G(t))(1-F(t))$.  Here $A$ is a normalization constant that can be determined by integrating this expression over time:

\begin{equation}
\int_{0}^{+\infty}{(1-G(t))(1-F(t))dt}=1/A
\label{age}
\end{equation}
Once this constant is determined, the overall transition rate  can be evaluated as the weighted average (based on the age distribution):

\begin{eqnarray}
\hat{\mu}=A\int_{0}^{+\infty}{(1-G(t))(1-F(t))h_g(t) dt}=\nonumber\\
=A\int_{0}^{+\infty}{(1-F(t))g(t) dt}=A\gamma
\label{haza4}
\end{eqnarray}
Here $h_g(t)$ is the hazard rate for distribution $G$, as provided by Eq.~\ref{haza} and  the ``winning race ratio"  $\gamma$  represents the chances that $g$ will ``win the race" with the competing outflow ({\em cf.}~\cite{VolovoiVega2012} where a similar ratio was analyzed for small time scales). As discussed below, this ratio also corresponds to the appropriate term in the embedded discrete time transition matrix.

First we note that if there is only one outflow  transition from the node, then $F(t)\equiv 0$, and from Eq.~\ref{haza4} we can readily observe that $h=A$. From Eq.~\ref{age} we can use the integration by parts to obtain
\begin{eqnarray}
1/A=\int_{0}^{+\infty}{(1-G(t))dt}=-\int_{0}^{+\infty}{td(1-G(t))}\nonumber\\
=\int_{0}^{+\infty}{tg(t)dt}=\bar{G}
\label{ageA}
\end{eqnarray}
This confirms the assertion made in the beginning of the paper that for a system with one part, the transition rates are fully determined by the mean value of the distribution (so that the transition rate is its inverse).  Furthermore, a similar assertion can be made for all the nodes without a ``race" between different outflows (or if all outflows are aggregated together to measure the entire outflow).

Next, we can make some simplifications when $F$ is an exponential distribution. 
First, let us rewrite  Eq.~\ref{haza4}: 
\begin{eqnarray}
\hat{\mu}=A\int_{0}^{+\infty}{e^{-\lambda t}g(t) dt}=A\gamma
\label{haza4a}
\end{eqnarray}
and $\gamma$ is the only constant that is needed to evaluate the transition rate. Indeed,
Eq.~\ref{age} can be written as follows:
\begin{eqnarray}
1/A=\int_{0}^{+\infty}{e^{-\lambda t}(1-G(t))dt}=\nonumber\\
=-\frac{1}{\lambda}\int_{0}^{+\infty}{(1-G(t))de^{-\lambda t}}=\nonumber\\
=\frac{1}{\lambda}\left[1+\int_{0}^{+\infty}{e^{-\lambda t}d(1-G(t))}\right]=\frac{1-\gamma}{\lambda}
\label{ageB}
\end{eqnarray}
Combining Eqs.~\ref{haza4} and~\ref{ageB}, we obtain the expression for the transition rate competing with an exponential outflow:
\begin{equation}
\hat{\mu}=
\frac{\lambda}
{1/\gamma-1}
\label{trans}
\end{equation}
Important special cases can be noted: first,  if $G$ is an exponential distribution, then $\gamma=\mu/(\lambda+\mu)$, and from Eq.~\ref{trans} we can observe that no correction is needed to the transition rate ($\hat{\mu}=\mu$). Finally, for a fixed delay $\tau$, we observe that $\gamma=e^{-\lambda\tau}$, so the  corrected transition rate has the following form:
\begin{eqnarray}
\hat{\mu}=\frac{\lambda}{e^{\lambda \tau}-1}
\label{haza5}
\end{eqnarray}

Let us return to the example with two redundant components, and consider a constant failure rate $\lambda=1$ and a fixed repair delay $\tau=1$. The effective repair rate will be $\hat{\mu}=0.58198$, which, substituting it into  Eq.~\ref{haza1}, translates into the hazard rate $h_f=0.55835$. Verifying  this result using Monte Carlo simulation up to time $1000$  with  1 million replications provides the value of $0.55829$. In summary, when comparing to scenario where the repairs follow an exponential distribution with the same mean, the effective repair rate $\mu$ drops  from $1$ to $0.58198$, which causes an increase of the hazard rate from $0.5$ to $0.55835$.

For steady-state semi-Markov processes, the described procedure is equivalent to the well-known two-step evaluation using renewal processes~\cite{Haverkort2001}: at the first step, steady-state probabilities  $\pi_i$ for each state of the embedded discrete Markov chain are obtained; at the second step, those probabilities are weighted by the mean holding times at each state $m_i$:
\begin{eqnarray}
P_i=\frac{\pi_im_i}{\sum_{j=1}^n{\pi_jm_j}}
\label{renewal}
\end{eqnarray}
The following statement holds:
  \begin{lemma}
 Finding a steady-state solution for a continuous Markov process with all non-exponential transitions being corrected by using Eqs.~\ref{age}, \ref{haza4} to calculate equivalent transition rates is equivalent to solving the two-step renewal process~\cite{Haverkort2001}.
\end{lemma}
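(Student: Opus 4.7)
The plan is to show that the corrected transition rate $\hat{\mu}_{ij}$ out of any state $i$ factors as $p_{ij}/m_i$, where $p_{ij}$ is the transition probability of the embedded discrete Markov chain and $m_i$ is the mean holding time in state $i$. Once this factorization is in hand, the balance equations of the corrected continuous Markov process reduce to the stationary equations of the embedded chain, with $P_i$ given by Eq.~\ref{renewal}.

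First, I would generalize Eqs.~\ref{age} and~\ref{haza4} to a state $i$ with an arbitrary number of outflows, each described by a density $g_{ij}$ and survival function $1-G_{ij}$. For each outflow $j$, define
\begin{equation}
\gamma_{ij} = \int_{0}^{+\infty} g_{ij}(t)\prod_{k\neq j}(1-G_{ik}(t))\,dt,
\end{equation}
and $1/A_i = \int_{0}^{+\infty}\prod_{k}(1-G_{ik}(t))\,dt$, so that the corrected rate is $\hat{\mu}_{ij} = A_i\gamma_{ij}$. This is the natural multi-competitor extension of the pairwise ``race'' treated in the paper, where $F$ aggregated all outflows other than $G$.

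Next, I would establish two auxiliary facts: (a) $\sum_j \gamma_{ij}=1$, obtained by recognizing the integrand as $-\tfrac{d}{dt}\prod_k(1-G_{ik}(t))$ and applying the fundamental theorem of calculus; and (b) $1/A_i$ equals the mean holding time $m_i = E[\min_k T_{ik}]$, via the identity $E[X]=\int_0^{+\infty}\Pr\{X>t\}\,dt$ for nonnegative $X$. Combining (a) with the interpretation of $\gamma_{ij}$ as the probability that the outflow $j$ ``wins the race'' among independent competing lifetimes shows that $\gamma_{ij}$ is exactly the embedded-chain transition probability $p_{ij}$. Hence $\hat{\mu}_{ij}=p_{ij}/m_i$.

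Finally, the continuous-time balance in the corrected process, $P_i\sum_j\hat{\mu}_{ji}=\sum_j\hat{\mu}_{ij}P_j$, becomes $P_i/m_i = \sum_j (p_{ji}/m_j)P_j$. Substituting the ansatz $P_i = C\pi_i m_i$ reduces it to $\pi_i=\sum_j \pi_j p_{ji}$, the stationary equation of the embedded discrete chain; fixing $C$ by $\sum_i P_i=1$ recovers Eq.~\ref{renewal} verbatim. The only real technical obstacle is the multi-outflow extension in step (a), since the paper's derivation is written for a single non-exponential transition racing against an aggregated competitor $F$; the product-rule identity for $\prod_k(1-G_{ik}(t))$ handles this cleanly, after which everything else is straightforward linear algebra.
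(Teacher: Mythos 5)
Your proposal is correct and follows essentially the same route as the paper: you show that the corrected rate factors as $\hat{\mu}_{ij}=\gamma_{ij}/m_i$ with $\gamma_{ij}$ the embedded-chain transition probability (the paper's aggregated competitor $F$ from Eq.~\ref{combinedFlows} is exactly your product $\prod_{k\neq j}(1-G_{ik})$, so your ``multi-outflow extension'' is already contained in the paper's setup) and $1/A_i=m_i$ the mean holding time. The only difference is that you explicitly verify $\sum_j\gamma_{ij}=1$ and carry out the balance-equation substitution $P_i=C\pi_i m_i$ to recover Eq.~\ref{renewal}, steps the paper leaves implicit after noting that the embedded transition probabilities and mean holding times are preserved.
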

\begin{proof}
To show the equivalency, we demonstrate that both transitional probabilities of the embedded discrete Markov chain and the mean holding times are preserved by the procedure developed in this paper. Indeed, let us consider a state-space node $S_i$ (see Figure.~\ref{genericNode}), with output transitions $T_1\ldots T_l$ following general distributions $G_1\ldots G_l$, and focus on a single transition $T_{w}$ to a particular output state $O_{w}$. This transition
 corresponds to the globally numbered state $S_j$ with the rest of the transitions combined into a single distribution $F$  (Eq.~\ref{combinedFlows}). Then,  one can observe that   $\gamma$ introduced in Eq.~\ref{haza4} by definition represents the corresponding term of the embedded discrete transition matrix $K_{ij}$. 
 
  Further combining $G_{w}$ with $F$ into a single distribution $G$ for all outflows from the $S_i$ and invoking Eq.~\ref{ageA}, yields $1/A=m_i$.
\end{proof}

While the end result is the same when both methods are applied to finding the steady-state solution of a  semi-Markov process, the method described in this paper directly focuses on the local deviation of the transition flow intensity from an exponential transition. As a result, only a  single global problem needs to be solved, and the impact of individual transition distributions is more transparent before the global problem is solved (one can immediately observe whether the equivalent transition rate is smaller or larger as compared to the value implied by the mean). Furthermore, the flow intensity perspective provides a stepping stone for finding asymptotic rates to absorbing states for semi-Markov processes and solving non-Markovian problems, as described next.

\section{Asymptotic hazard rate for semi-Markov process}
In the presence of an absorbing state, a similar  local  flow  balancing can be applied as well. To that end, a Eulerian view of the semi-Markov process  is adopted and an equivalent Markov process is considered (the convergence of the process to a quasi-stationary state can be assumed~\cite{Doorn2013}).  Invoking Corollary 1.1  and the presence of a quasi-steady state implies that the relative probabilities for inputs to  a generic node (as depicted in Figure~\ref{genericNode}) do not change with time, so all inflows can be effectively combined into a single inflow. In contrast to the steady-state scenario, there an additional  decay term $e^{-kt}$ applied to all  the input states (see Corollary 1.2). Therefore,  Eq.~\ref{age} must be adjusted to provide relatively more weight to older ages (recall that  $t$ represents the backward time from some instant $t_0$, so the $e^{kt}$ is added):
\begin{equation}
\int_{0}^{+\infty}{e^{kt}(1-G(t))(1-F(t))dt}=1/A
\label{age5}
\end{equation}
In other words, relatively speaking, there will be more chances for greater holding time, since the inflow was stronger in the past.
As a result,  Eq.~\ref{haza4} changes to
\begin{eqnarray}
\hat{\mu}=A\int_{0}^{+\infty}{e^{kt}(1-F(t))g(t) dt}
\label{haza6}
\end{eqnarray}

When the transition in question is competing with another transition that follows  an exponential distribution with the rate $\lambda$,   we can use similar formulae to those of the steady-state case, while replacing $\lambda$ with $\lambda-k$:
\begin{eqnarray}
\gamma_a=\int_{0}^{+\infty}{e^{-(\lambda-k) t}g(t) dt}\label{gamma2}\\
\hat{\mu}=\frac{\lambda-k}{1/\gamma_a-1}
\label{mu2}
\end{eqnarray}

One interesting implication of Eqs.~\ref{age5},~\ref{haza6} is that even in the absence of a race among the outflows from a state, the effective outflow rate deviates from the exponential equivalent, and the only time when there is no deviation is when there is a competing exponential outflow with intensity $\lambda=k$.

The described procedure  appears to be relying  on a circular argument: $k$ is obtained as a Perron-Frobenius eigenvalue of the transition matrix, which requires the knowledge of all transition rates for matrix $Q$. Therefore, an iterative procedure is implemented: starting with a guess (for example, from using uncorrected exponential values), convergence is achieved  in all the examples considered below, as well as for a more involved example considered in~\cite{Volovoi2017mmr}.
Moreover, the following observation provides the grounds for conjecture that the convergence can be achieved under fairly general conditions. 

Let us consider a $j$th step of an iteration with the value of the Perron-Frobenius eigenvalue $k(j)$. This value can be substituted into Eqs.~\ref{haza6}, \ref{age5} to solve for new values of all  transition rates that require the adjustments  $\hat{\mu}_{i}(j+1)$.  After this, the new Perron-Frobenius eigenvalue $k(j+1)$ is calculated. In other words, there is a sequence of two alternating mappings:
\begin{eqnarray}
k(j)\mapsto \hat{\mu}_{i}(j+1)\mapsto k(j+1)\ldots
\label{mapping}
\end{eqnarray}
To ensure that the iterations converge, it is critical that the mappings contract, so that the differences between the successive iterations shrink sufficiently fast. The first mapping (as described by  Eqs.~\ref{haza6}, \ref{age5}) is ``local" ({\em i.e.,} related to a single transition and determined by the combination of the distributions for the involved outflow transitions), while the second is ``global" (as determined by the  system configuration and expressed as the Perron-Frobenius eigenvalue).  

In this context, a general three-state system (depicted in Figure~\ref{extremeSystem})  should  provide enough flexibility to explore the different scenarios of the second mapping.  For example, let us  design a configuration where changes in the transitions result in the largest relative change in the hazard rate. For specificity, let us consider a situation where the objective  is to increase the hazard rate ($k\uparrow$). Recalling Corollary 1.3, we can observe that this can be achieved by increasing the intensity of the ``edge" transitions (i.e., directly into the absorbing state, $ e_1\uparrow$, $e_2\uparrow$), and by increasing the conditional probability of the nodes with the largest intensity of the ``edge" transitions (assuming $e_2>e_1$ for specificity,  $i_1\uparrow$, $i_2\downarrow$). 

Intuitively, introducing more states simply dilutes the effect. Indeed, let us order  the states  based on the intensity of their respective edge outflows and  compare the result to a three-state system that contains only  the safest and the least safe states. Both actions that impact the hazard rate ({\em i.e.,} the change in the intensity of the edge transitions, and shifting the conditional probabilities within the non-absorbing states; see Eq.~\ref{pf66}) have more direct (and therefore more pronounced) effects in the three-state system. Therefore, the sensitivity of the hazard rate with respect to the changes in the transition rates will be dampened in a system with intermediate states, and the overall convergence rate will increase. In other words, testing the convergence for a particular combination of transition distributions on the three-state system provides a good indication of the convergence of the described procedure for a larger system. 
\begin{figure}[!t]
\centering
\includegraphics[width=2.5in]{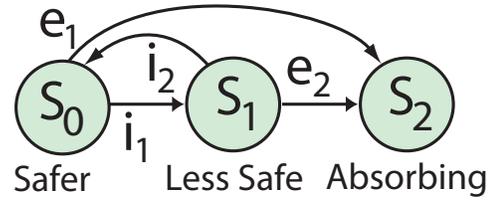}
\caption{A general three-state system}
\label{extremeSystem}
\end{figure}
 
 The extreme scenario of the combination of changes for the transitions described above requires a deliberate choice of different types of distributions for each transition. For transitions with fixed delays,  Eqs.~\ref{gamma2}, \ref{mu2}) simplify to
\begin{eqnarray}
\hat{\mu}=\frac{\lambda-k}{e^{\tau \left(\lambda-k\right)}-1}
\label{haza7}
\end{eqnarray}

This enables us to construct an approximation of the extreme scenario described above:  from Eq.~\ref{haza7} it is clear that increasing $k$ will increase the effective rate of a fixed transition. To exploit this property, we will use fixed delays for transitions $e_2$ (to increase the edge outflow from the least safe place) and for $i_1$ to skew the balance of the non-absorbing states toward the least safe place. The two remaining transitions follow exponential distributions. Let us have fixed delays $\tau_1=\tau_2=1$ and the exponential rates $\lambda_1=\lambda_2=0.1$. The equivalent intensity for both fixed-delay transitions  $\hat{\mu}_1=\hat{\mu}_1=1.913$ with the resulting hazard rate $k=1.5756$. Both transition rates and the system hazard rates are almost double their exponential counterpart ($k_{exp}=0.78377$), so this is quite an extreme scenario in terms of the size of the non-Markovian effects. Expectedly, the rate of convergence is relatively slow as well (see Fig.~\ref{convergence}, the green curve).
\begin{figure}[!t]
\centering
\includegraphics[width=3.0in]{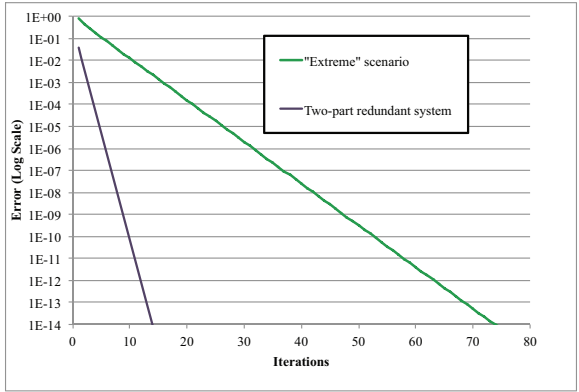}
\caption{Comparison of rates of convergence for the ``extreme" scenario with a pronounced non-Markovian effect vs. for the two-part redundant system}
\label{convergence}
\end{figure}



Next, we return to the two-part redundant system and  apply Eq.~\ref{haza7} to find the equivalent repair rate. For the considered values of $\lambda=1/\tau=1$,  the calculated value for $\hat{\mu}=0.82427$ and $k=0.62518$. This matches  the finite difference value $k=0.62513$ quite well. In Figure~\ref{convergence}, the purple curve shows both the faster rate of convergence and the smaller non-Markovian effect as compared to the ``extreme" scenario. The vertical axis in Fig.~\ref{convergence} is logarithmic, so that the absolute deviation from the converged solution is shown. However, it is worth noting that, in contrast to the ``extreme" scenario, the deviation alternates in sign. This makes physical sense: an increase in the repair rate ($\hat{\mu}\uparrow$) decreases the hazard rate ($k\downarrow$) and vice versa.

Next, we employ Eqs.~\ref{gamma2},~\ref{mu2} to explore repairs that follow Weibull and lognormal distributions. To facilitate the comparison, mean values are kept at unity, and the shape parameter $\beta$ is varied for Weibull distribution, while a ratio of variation to the square of a mean, called squared coefficient of variation (SCV), is used as a varying parameter for lognormal distribution. 

Figure~\ref{weibullSense} shows the results for repairs  that follow a Weibull distribution with a shape parameter $0.3\leq \beta\leq 5$.  For small values of $\beta$, the ``winning race" ratio $\gamma_a$ (Eq.~\ref{gamma2}) is very high, so that the equivalent $\hat{\mu}$ is significantly higher than $1$. In fact,  for $\beta<0.3$, the values are even higher (which is the reason they are not shown on the chart).  Large values of $\hat{\mu}$ make the system less likely to stay in a degraded state, so the system hazard rate is smaller. This effect is attenuated as $\beta$ increases, with $\beta=1$ resulting in the values previously discussed for exponential repair (with $\hat{\mu}=1$, $\gamma_a=0.70711$, and $h_a=0.58579$). For larger $\beta$,  $\hat{\mu}$ decreases slightly, eventually leveling off, with a corresponding slight increase in the system hazard rate. The results are consistent with the trends discussed in~\cite{VolovoiVega2012}.

In Figure~\ref{logSense} the results are shown for repair following lognormal distribution.  For very small  SCV the lognormal distribution mimics a fixed delay repair, so as expected, the values on the left side of the figure replicate the values obtained earlier for the fixed delay. As SCV increases, the  $\gamma_a$ increases slightly, which results in an increase of the effective rate of repair $\hat{\mu}$ and the reduction of the system hazard rate. When SCV=1 ({\em i.e.,} the same value as for an exponential distribution) the results are close to those for the exponential distribution but are  slightly off: $\gamma_a=0.70425$, $\hat{\mu}=0.97273$,  and $h_a=0.59150$. For larger values of SCV, the hazard rate continues to be reduced, so for $\text{SCV}=5$ the values are $\gamma_a=0.73608$,  $\hat{\mu}=1.31689$, and $h_a=0.52784$. It is interesting to note that the trend is inverse to that observed in queues where reduction of SCV is beneficial for queue efficiency~\cite{Whitt1993}.

\begin{figure}[!t]
\centering
\includegraphics[width=3in]{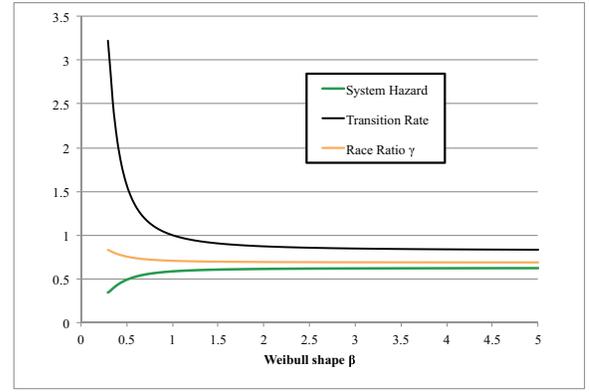}
\caption{Sensitivity to the shape parameter for Weibull repair}
\label{weibullSense}
\end{figure}

\begin{figure}[!t]
\centering
\includegraphics[width=3in]{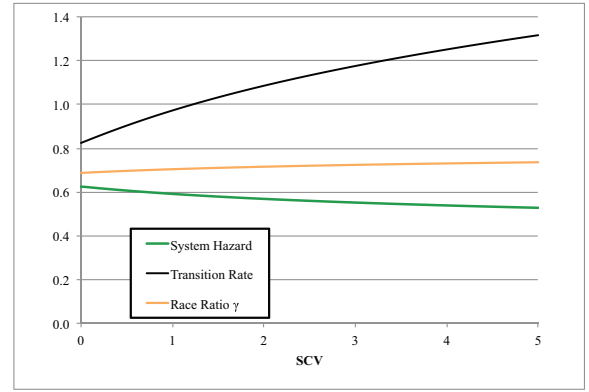}
\caption{Sensitivity of lognormal repair to SCV}
\label{logSense}
\end{figure}

\section{Non-Markovian scenarios}
The examples considered above demonstrate the efficiency  of the approach to a range of semi-Markovian models (see also~\cite{Volovoi2017mmr} for a more involved example). However,  in many practical situations, transitioning from a Markov to a semi-Markov representation requires additional efforts due to the fact that  not all the states are regeneration points of the modeled  process. 

To demonstrate the resulting challenges, let us consider a simple variation of the two-part redundant system example by introducing a  single repair server. When both components have failed, the component that has failed earlier continues to be repaired, and upon its repair the second repair commences. The performance of the system is measured in terms of the steady-state probabilities for operating states ({\em i.e.,} the system availability). The corresponding Markov model is shown in Fig.~\ref{markovSingleRepair}.  If repairs follow exponential distributions, then the solution is straightforward, with $\mu_1=\mu_2=\mu$.  Instead, let us consider repairs of fixed duration $\tau$. 

\begin{figure}[!t]
\centering
\includegraphics[width=3in]{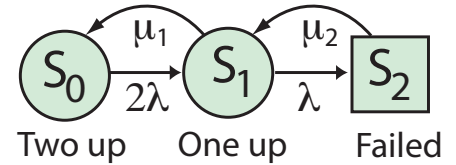}
\caption{Markov model for single repair}
\label{markovSingleRepair}
\end{figure}

The resulting model is not semi-Markovian in the following sense: when the system arrives at the $S_2$ state, it implies that in the $S_1$ state  the repair ``lost the race" to the failure of another component. In other words, some time of the repair has already elapsed, so there is no regeneration point in the $S_2$ state: the remaining time for repair depends not only on the holding time in $S_2$, but also on how much time was previously spent in $S_1$. Following~\cite{Distefano2012}, the corresponding state is depicted as a square (as opposed to a circle) to differentiate  states where no regeneration of the process occurs.

In contrast,  $S_1$  is a regeneration point for the repair: whether the system arrives to the state from  $S_0$ or $S_2$, the repair will start from the beginning, so $\mu_1$ can be calculated using Eq.~\ref{haza5}.  One way to calculate the equivalent rate $\mu_2$ is by noticing that while from the system-level perspective the $S_1$  and $S_2$ are distinct, there are no changes occurring to the repair process itself (since no preemption of the repair is considered). In other words, the overall repair rate remains $\mu$:
\begin{eqnarray}
\hat{P}_1\mu_1+\hat{P}_2\mu_2=\mu
\label{haza8}
\end{eqnarray}
Here $\hat{P}_i=Pr\left\{X=S_i|X\neq S_0\right\}$ are conditional probabilities. We can solve this equation to find out the transition rate:
\begin{eqnarray}
\mu_2=\frac{\mu\lambda}{\lambda+\mu_1-\mu}
\label{haza9}
\end{eqnarray}
The resulting states are $P_1=0.53391$ and $P_2=0.31072$. From simulation the values are $P_1=0.53390$ and $P_2=0.31072$ (1 million replications, with time averaging for times $t\in[100,1000]$). 

This method has obvious scaling limitations (that is, knowing the overall rate will not be sufficient to recover individual rates if there are more than two states), so it is useful to look for a more direct alternative that accounts for the holding time in the previous state. To that end, it is easier to deal with the mean time to repair, $m_2=1/\mu_2$, and directly integrate the remaining time to repair based on the intensity of the transition into the $S_2$ state:

\begin{eqnarray}
m_2=\frac{1}{F(\tau)}\int_0^\tau{\lambda e^{-\lambda t}(\tau-t)dt}=\frac{\tau}{F(\tau)}-\frac{1}{\lambda}
\label{mttr}
\end{eqnarray}
It is easy to check that Eqs.~\ref{mttr} and \ref{haza8} are equivalent.

Before turning to the last example, let us introduce a  simple but useful Lemma:
\begin{lemma}
 In a steady state, if a state $S_i$ has only a single outflow, the intensity $\mu_i$ of this outflow does not impact the conditional probabilities of all other states  $\hat{P}_j=Pr\left\{X=S_j|X\neq S_i\right\}.$ \end{lemma}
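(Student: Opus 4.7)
The plan is to work directly with the steady-state balance equations and show that $\mu_i$ drops out of every equation except the one that determines $P_i$ itself. First I would write, at each state, the equality of inflow and outflow rates. The balance at $S_i$ reads
\begin{equation*}
\mu_i P_i \;=\; \sum_{l \neq i} q_{li} P_l,
\end{equation*}
which expresses the outflow rate $\mu_i P_i$ as a linear functional of the remaining $P_l$ whose coefficients do not involve $\mu_i$. Denote this common value by $R$.

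Next I would examine the balance at each state $S_j$ with $j \neq i$. Because $S_i$ has a single outflow, to some unique successor $S_k$, the incoming term of the form $q_{ij} P_i$ vanishes unless $j = k$, in which case it equals $\mu_i P_i = R$. Substituting the displayed relation rewrites this one term as a linear combination of $\{P_l : l \neq i\}$ with coefficients independent of $\mu_i$. Every other inflow/outflow contribution at $S_j$ already involves only transition rates between states distinct from $S_i$. Hence, after this substitution, the full system of balance equations on $\{P_l : l \neq i\}$ is a linear system whose coefficients do not depend on $\mu_i$.

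The $P_l$ for $l \neq i$ are therefore pinned down up to a single multiplicative scalar that is itself independent of $\mu_i$. That scalar is fixed by the normalization $\sum_j P_j = 1$, which \emph{does} change with $\mu_i$ (through $P_i = R/\mu_i$), but it cancels when one forms the conditional probabilities $\hat{P}_j = P_j / \sum_{l \neq i} P_l$. So each $\hat{P}_j$ is independent of $\mu_i$, as claimed. The only step requiring care is the substitution at the successor $S_k$, the unique place where $\mu_i$ surfaces directly in another state's balance; I expect this to be the main bookkeeping point rather than a genuine obstacle.
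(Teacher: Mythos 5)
Your proposal is correct and rests on the same structural fact as the paper's proof: since $S_i$ has a single outflow, $\mu_i$ enters the steady-state system $Q\mathbf{P}=0$ only through the combined term $\mu_i P_i$, so it can affect $P_i$ (and the overall normalization) but not the relative magnitudes of the other probabilities. Your explicit elimination of $\mu_i P_i$ via the balance at $S_i$ and its unique successor is just a slightly more detailed bookkeeping of the paper's one-line scaling argument.
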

\begin{proof}
The linear system of equations for the steady state $Q\mathbf{P}=0$ contains only combined terms $\mu_i P_i$, which implies that changing $\mu_i$ will simply scale $P_i$ (and therefore impact the overall normalization for all probabilities), but it will not impact the relative magnitudes of the probabilities for all other states. \end{proof}
This lemma explains the independence of the solution for an asymptotic hazard rate from the rate of an artificial transition $\nu$ introduced to avoid an absorbent state for the two-part redundant system mentioned earlier.  

\begin{figure}[!t]
\centering
\includegraphics[width=3in]{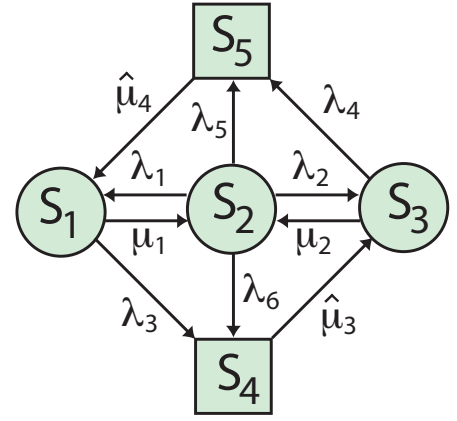}
\caption{State space diagram for a redundant system with two distinct parts and common failure mode}
\label{distefanoD}
\end{figure}

Finally, let us consider an example from~\cite{Distefano2012} of a two-part redundant system that has a common failure mode and  single repair crew, and where the two parts are distinct (a similar system has been also studied in~\cite{Pijnenburg1993}). The state diagram is shown in Figure~\ref{distefanoD}. The numeration of the state space and the numerical values of parameters follow~\cite{Distefano2012}. State $S_2$ (in the center of the diagram) corresponds to the fully operational state. Single failures of the first and second part  trigger transitions to the states $S_1$ and  $S_3$,  with the rates $\lambda_1=(1-q)\delta_A\lambda_A$ and $\lambda_2=(1-q)\delta_B\lambda_B$, respectively. Here $q$ is the common-cause parameter; $\delta_A=0.3$ and  $\delta_B=0.6$ are the sharing load parameters, while $\lambda_A=0.002$ and  $\lambda_B=0.01$ are the failure rates for components $A$ and $B$, respectively. Common-cause failure (when both parts fail) can occur with the intensity $q(\delta_A\lambda_A+\delta_B\lambda_B)$, and when this occurs, there are even chances that either part will be repair first. This differentiates the states $S_4$ and $S_5$ corresponding to the repair of  part $A$ and $B$ first, respectively. As a result, $\lambda_5=\lambda_6=q(\delta_A\lambda_A+\delta_B\lambda_B)/2$. 

After a part fails, the other part can fail as well with a regular failure rate for each part, so $\lambda_3=\lambda_B$ and  $\lambda_4=\lambda_A$. Finally, the repairs of each part have a fixed delay of $\tau=10$ (this last  parameter is changed as compared to ~\cite{Distefano2012}, so that  difference with the simple Markov approximation can be observed).
The Markov transition matrix has the following form:
\begin{equation}
\quad Q=\left(
\begin{array}{ccccc}
\cdot&\lambda_1&0&0&\hat{\mu}_4\\
\mu_1& \cdot&\mu_2&0&0\\
0&\lambda_1&\cdot&\hat{\mu}_3&0\\
\lambda_3&\lambda_6&0&\cdot&0\\
0&\lambda_5&\lambda_4&0&\cdot\\
\end{array}
\right)
\label{kolm33}
\end{equation} 
Here the diagonal terms are shown with $``\cdot"$ for brevity (they are calculated as shown in the right part of Eq.~\ref{kolm1}).
The states $S_1$ and $S_3$ are the renewal states, so Eq.~\ref{haza5} can be directly used to evaluate the repair rates from those states:
\begin{eqnarray}
\mu_1=\frac{\lambda_3}{e^{\lambda_3 \tau}-1},\quad \mu_2=\frac{\lambda_4}{e^{\lambda_4 \tau}-1}
\label{di1}
\end{eqnarray}
The repair rates for the states $S_4$ and $S_5$ are somewhat more challenging, as there are two distinct possibilities for each state: if the previous state was $S_2$ the repair has only started, and so the rates from Eq.~\ref{di1} are appropriate. On the other hand, a part might fail when it operates alone, so the repair of another part has already started (and so Eq.~\ref{mttr} provides the appropriate mean time to repair). Let us consider the state $S_4$ first. If the previous state was $S_3$, then the corresponding failure rate (using Eq.~\ref{mttr}) is 

\begin{eqnarray}
\mu_3=\frac{\lambda_3F_3(\tau)}{\tau\lambda_3-F_3(\tau)}
\label{di2}
\end{eqnarray}

The total repair rate for  $S_3$ is a weighted average of the two rates, where the rates are based on the relative intensities of the inflows to the state:

\begin{eqnarray}
\hat{\mu}_3=\frac{P_1 \lambda_3\mu_3+P_2 \lambda_6\mu_1}{P_1 \lambda_3+P_2 \lambda_6}
\label{di3}
\end{eqnarray}

Similarly, for $S_3$ we obtain:

\begin{eqnarray}
\mu_4=\frac{\lambda_4F_4(\tau)}{\tau\lambda_4-F_4(\tau)}, \quad \hat{\mu}_4=\frac{P_3 \lambda_4\mu_4+P_2 \lambda_5\mu_2}{P_3 \lambda_4+P_2 \lambda_5}
\label{di4}
\end{eqnarray}

Finally, we can apply Lemma 3, which stipulates that $\hat{\mu}_3$ and $\hat{\mu}_4$ don't affect the relative probabilities for states $S_1$, $S_2$, and $S_3$. Therefore, a simple procedure allows us to find the steady-state solution: first we solve the system of equations $Q\mathbf{P}=0$ (as usual for finding the steady-state  solutions, one of the rows in Eq.~\ref{kolm33} is replaced with the normalization condition for  with any values of  $\hat{\mu}_3$ and $\hat{\mu}_4$), then we use the probabilities $P_1$, $P_2$, and $P_3$ in Eqs~\ref{di3}, \ref{di4} to calculate the actual values for $\hat{\mu}_3$ and $\hat{\mu}_4$, and solve the linear system again (linear solvers are so computationally efficient  that partitioning the matrix, in order to solve the problem in a single step, is not worthwhile).

Figure~\ref{distefano} shows the chances of the down states ({\em i.e.,} $P_4+P_5$) as a function of the common-cause parameter $q$ for the full model as well as for Markov approximation.

\begin{figure}[!t]
\centering
\includegraphics[width=3in]{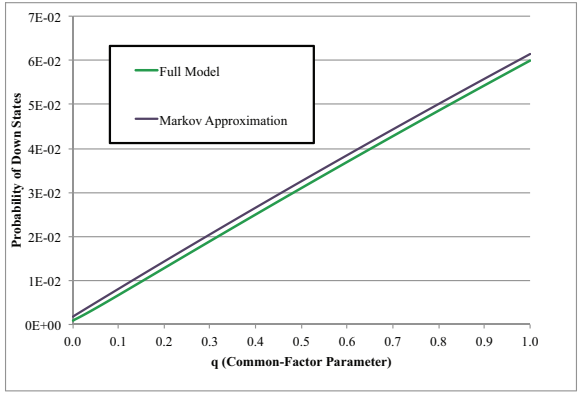}
\caption{Sensitivity of the chances of the down state as a function of the common-cause parameter $q$}
\label{distefano}
\end{figure}

\section{Conclusion}
A simple practical procedure for correcting Markov models for non-Markovian effects is presented.  In contrast to the existing approaches, the focus is on the asymptotic analysis of the inflows and outflows of probabilities for individual states.  As a result, direct impact of the choice of the distributions for particular transitions can be locally assessed by converting them into equivalent exponential transitions. For a steady-state analysis of a semi-Markovian process, the result is equivalent to the two-step procedure involving an embedded discrete Markov chain. The local nature of the analysis presented in the current paper is suitable for large-scale applications:  the correction to the transition rates can be conducted only when needed as pre-processing, before assembling the global Markov model. 

The inflows-outflows perspective for individual states has been further utilized to evaluate the asymptotic system transition rate to an absorbing state (hazard rate) for both Markovian and semi-Markovian properties. Several general properties for this hazard rate have been established, facilitating the scalability of the developed solutions.  Finally,  the same local approach has been applied to tackle two relatively simple examples that exhibit essentially non-Markovian effects (the holding times from the previous system states must be taken into account). The resulting  compactness of the solutions provides hope that general procedures for tackling large-scale problems with non-Markovian effects  can be developed in the future.

\ifCLASSOPTIONcaptionsoff
  \newpage
\fi



\bibliographystyle{IEEEtran}
\bibliography{delmain}
\end{document}